\DeclareMathOperator*{\argmin}{arg\,min}
\title{$k$-means clustering of extremes}
\author{
Anja Jan{\ss}en\footnote{KTH Royal Institute of Technology; Department of Mathematics, Lindstedtsv\"agen 25, 20 400 Stockholm, Sweden;
email: anjaj@kth.se}
\and
Phyllis Wan\footnote{Erasmus University Rotterdam; Erasmus School of Economics, Burg.\ Oudlaan 50, 3062 PA Rotterdam, the Netherlands;
email: wan@ese.eur.nl}}
\newtheorem{theorem}{Theorem}[section]
\newtheorem{remark}[theorem]{Remark}
\newtheorem{proposition}[theorem]{Proposition}
\numberwithin{equation}{section}
\def\Min(#1,#2){#1\wedge #2}
\def\Max(#1,#2){#1\vee #2}
\begin{document}

\maketitle

\begin{abstract}
The $k$-means clustering algorithm and its variant, the spherical $k$-means clustering, are among the most important and popular methods in unsupervised learning and pattern detection. In this paper, we explore how the spherical $k$-means algorithm can be applied in the analysis of only the extremal observations from a data set. By making use of multivariate extreme value analysis we show how it can be adopted to find ``prototypes'' of extremal dependence and derive a consistency result for our suggested estimator. In the special case of max-linear models we show furthermore that our procedure provides an alternative way of statistical inference for this class of models. Finally, we provide data examples which show that our method is able to find relevant patterns in extremal observations and allows us to classify extremal events. 
\end{abstract}
{\footnotesize \noindent\it Keywords and phrases: dimension reduction; extreme value statistics; k-means clustering; spectral measure} \\
{\footnotesize {\it AMS 2010 Classification:} 62G32 (62H30; 60G70).}

\section{Introduction}\label{Sec:intro}
When looking at multivariate and in particular high-dimensional data, it is one of the most important objectives  of a statistical analysis to detect structures and patterns in the observations so as to simplify their complexity. This task has led to the development of an abundance of procedures in the field of unsupervised learning, see for example \cite{Hastieetal09} for an overview. Usually the analysis is focused on describing the bulk of the data and one looks for results that apply to most observations at hand. However, when \textit{extremal} observations are of interest, a different approach needs to be taken.  In this paper we consider the complexity reduction of extremal observations via clustering. 

For a specific dataset, a naive implementation is to choose the observations with the largest norm (thereby considering them as extremal observations) and apply a clustering algorithm to them.  However, this proves to be inefficient as extremal points are typically spread out in space.
In the presence of heavy-tailed observations, most classical clustering algorithms would have further problems from the possibly infinite second moments. In order to allow for robust and meaningful estimation, one should incorporate structural results about the particular kind of data at hand into the estimation procedure.

Multivariate extreme value theory (MEVT) provides us with such a framework and has useful applications in a wide range of disciplines, such as finance and climate science, see for example \cite{Fougeres03} and \cite{Davisonetal12} for an overview. Most of the current parametric models and estimation methods focus on the bivariate or lower dimensional scenario and are difficult to generalize to higher dimensions due to either lack of flexibility or heavy computation loads, see \cite{DavisonHuser15}.  

Recently there have been a few attempts to adapt complexity reduction to extremal dependence. One way of doing so is applying classical dimension reduction techniques to (transformed) extremal observations, in the form of principal component analysis and related covariance matrix decomposition techniques, see \cite{Haugetal09} and \cite{CooleyThibaud16}, or empirical basis functions, see \cite{Morrisetal18}. Another direction of research is aimed at dividing the parameter space into lower dimensional subspaces via making use of the phenomenon of asymptotic independence (meaning that in extremal events there are often only a few components which are large at the same time), see \cite{Goixetal17} and \cite{Chiapinoetal19}.  \cite{Chautru15} identifies relevant subspaces by first reducing the dimension of projected observations by a spherical principal components procedure, then clustering the projected data using spherical $k$-means, and as a last step attributes a lower-dimensional subspace to each cluster. Finally, \cite{Bernardetal13} presents a classification approach with special emphasis on a spatial decomposition of separated regional clusters for extremal events. The methodology is based on a $k$-means like algorithm, where distances between two stations are measured by their $F$-madogram. 

The usage of $k$-means estimation in extremes is therefore not entirely new (see also \cite{Einmahletal12}, where the procedure is mentioned to produce starting values for numerical estimation algorithms), but so far it has only been applied as an intermediate step towards a specific goal and its theoretical properties have not been explored. The aim of this paper is twofold: First, we provide the theoretical background as to how a $k$-means algorithm applied to extremal observations can be constructed as a consistent estimator of theoretical extremal cluster centers, see Theorem \ref{Th:main}. Second, we demonstrate that these cluster centers themselves can be seen as prototypes of directions of extremal events and the algorithm therefore provides a comprehensive, computationally fast and robust procedure to interpret observed extremes. As a side effect we demonstrate that our procedure can be seen as an alternative, consistent way of estimating relevant components of max-linear models, which recently gained popularity in applications due to their relationship to causal models for extremes as implied by directed acyclic graph (DAG) models, see \cite{Gissibl18, Gissibletal18, GissiblKlueppelberg18}.

The paper is organized as follows: Section \ref{Sec:background} provides a short background on the two main components of our method, MEVT and the spherical $k$-means algorithm. In Section \ref{Sec:main}, we present a general consistency result for the spherical $k$-means algorithm in the extremal setting and construct a non-parametric estimator for the theoretical cluster centers. The application of our procedure to the particular class of max-linear models is outlined in Section \ref{Sec:max-linear}. Finally, we demonstrate the application and interpretation of our results by looking at three data examples in Section \ref{Sec:data}.


\section{Background}\label{Sec:background}
\subsection{Multivariate extreme value theory}
In order to describe the extremal behavior of a random vector, a typical assumption is that the componentwise maxima, generated from i.i.d.\ copies of this vector, converge jointly to a non-degenerate limit distribution after proper linear normalization. More formally, let $(X_1^i, \ldots, X_d^i), i \in \mathbb{N},$ be i.i.d.\ copies of the random vector $\mathbf{X}=(X_1, \ldots, X_d)$.  We assume that there exist sequences of constants $a_j^n>0, b_j^n \in \mathbb{R}, 1 \leq j \leq d, n \in \mathbb{N}$ and a (in each margin non-degenerate) distribution function $G$, such that 
\begin{equation}\label{Eq:maxlim} \lim_{n \to \infty}P\left(\frac{\max_{i=1,\ldots, n}X_1^i-b_1^n}{a_1^n} \leq x_1, \ldots, \frac{\max_{i=1,\ldots, n}X_d^i-b_d^n}{a_d^n}\leq x_d\right) = G(x_1, \ldots, x_d),
\end{equation}
for all continuity points $(x_1, \ldots, x_d)$ of $G$. 
We say that $\mathbf{X}$ is \textit{in the max-domain of attraction of the extreme value distribution} $G$. A central result of extreme value theory is that this convergence can be broken down into two separate components. First, all marginal distributions $G_i, 1 \leq i \leq d$, of $G$ have to be univariate extreme value distributions of the form
\begin{equation}
\label{Eq:univEVD}
G_i(x)=\exp\left(-\left(1+\gamma_i \frac{x-\mu_i}{\sigma_i}\right)^{-1/\gamma_i}\right), \;\;\; 1+\gamma_i \frac{x-\mu_i}{\sigma_i} >0, 
\end{equation}
with $\gamma_i, \mu_i \in \mathbb{R}, \sigma_i>0$, where for $\gamma_i=0$ the right hand side is interpreted as $\exp(-\exp(-(x-\mu_i)/\sigma_i)), x \in \mathbb{R}$. The parameter $\gamma_i$, known as the \textit{extreme value index}, is the most important parameter in describing the univariate extremal behavior of component $i$.
Here a wealth of statistical procedures exists for univariate extremes, see \cite{deHaanFerreira07}, Chapters 3 and 4, for an overview.  Second, let $F_i$ be the (continuous) marginal distributions of $X_i$, $i=1,\ldots,d$, then the convergence in \eqref{Eq:maxlim} holds if and only if the standardized vector 
\begin{equation}
\label{Eq:trans}
\mathbf{Y}=\left(\frac{1}{1-F_1(X_1)}, \ldots, \frac{1}{1-F_d(X_d)}\right) \end{equation} 
satisfies
\begin{equation}
\label{Eq:spectral}
\lim_{u \to \infty} P\left(\frac{\mathbf{Y}}{\|\mathbf{Y}\|} \in B \, \Big| \, \|\mathbf{Y}\| >u \right) = S (B), 
\end{equation} 
for a probability measure $S$ on $\mathbb{S}^{d-1}_+:=\{\mathbf{x} \in [0,\infty)^d: \|\mathbf{x}\|=1 \}$, where $\|\cdot\|$ stands for an arbitrary but fixed norm, and its continuity-Borel-sets $B$, see \cite{Beirlantetal06}, Chapter 8.  The measure $S$ in \eqref{Eq:spectral} thus describes the limiting behavior of the directions that we see in extremal observations from $\mathbf{Y}$. Furthermore, the measure $S$ is connected to the limiting behavior of maxima as described in the following.
The transformed $\mathbf{Y}$ satisfies
\begin{equation*} \lim_{n \to \infty}P\left(\frac{\max_{i=1,\ldots, n}Y_1^i}{n} \leq x_1, \ldots, \frac{\max_{i=1,\ldots, n}Y_d^i}{n}\leq x_d\right) = G_0(x_1, \ldots, x_d),
\end{equation*}
for i.i.d.\ copies $(Y_1^i, \ldots, Y_d^i), i \in \mathbb{N},$ of $\mathbf{Y}$, where $G_0$ is an extreme-value distribution with standard Fr\'{e}chet margins (i.e., $\gamma_i=\sigma_i=1, \mu_i=0$ in \eqref{Eq:univEVD}). For $G_0$ there exists a so-called \textit{exponent measure} $\nu$ such that
\begin{equation}\label{Eq:exponent} G_0(x_1, \ldots, x_d)=\exp(-\nu\{(u_1, \ldots, u_d) \in [0,\infty)^d: \exists i: u_i>x_i \}), \;\;\; (x_1, \ldots, x_d) \in [0,\infty)^d.\end{equation}
This exponent measure is homogeneous of degree $-1$ and there exists a constant $c>0$ such that
\begin{equation}\label{Eq:expmeasure} \nu\{\mathbf{u} \in [0,\infty)^d: \mathbf{u}/\|\mathbf{u}\| \in B, \|\mathbf{u}\|>y \}=c y^{-1} S(B)
\end{equation}
for Borel sets $B \subset \mathbb{S}^{d-1}_+$ and $y>0$.  The same measure $S$ appearing in \eqref{Eq:spectral} and \eqref{Eq:expmeasure} is called the \textit{spectral measure} of $\mathbf{X}$ or $\mathbf{Y}$ and by the above it uniquely describes the dependence structure (or copula) of both exceedances and maxima. 
Due to the marginal standardization of the vector $\mathbf{Y}$ there always exists a constant $c>0$ such that 
 \begin{equation}
 \label{Eq:standard}
 \int_{\mathbb{S}_+^{d-1}}x_i S(d\mathbf{x})=c
 \end{equation} 
 for all $i=1, \ldots, d$. In this analysis, we are less interested in the marginal extremal behavior of individual components and more concerned with the extremal dependence structures.  Hence we are interested in the structure of $S$.
 
 Note that even for a vector $\mathbf{X}$ whose marginals are not in the domain of attraction of an extreme value distribution, it can still make sense to define \eqref{Eq:trans} and look at \eqref{Eq:spectral} 
 in order to describe its extremal behavior. On the other hand, there may also be situations where one does not want to standardize marginals first but treats observations as coming from a random vector $\mathbf{Y}$ which satisfies \eqref{Eq:spectral} with a spectral measure $S$ that does not necessarily satisfy \eqref{Eq:standard}. In any case, the measure $S$ tells us about the angle or direction of an observation that is considered extreme, either in the original scale of observations, or after standardization. If there exist small sets which receive relatively high probabilities under $S$, these sets can be seen as ``typical" directions for an extremal event. The idea of this paper is to identify these sets without assuming a specific underlying model, thereby identifying extremal patterns in a non-parametric way.

We have until now not specified which norm we meant when writing $\|\cdot\|$. The general equivalence between convergence of multivariate maxima in \eqref{Eq:maxlim} and marginal convergences together with convergence of exceedances as described in \eqref{Eq:spectral} holds for any choice of norm $\|\cdot\|$, although the particular choice will of course affect the specific form of the spectral measure $S$. Depending on the particular application, there may be different possible choices for the particular norm. The main results in Section \ref{Sec:main} are therefore formulated in a general way that holds for any choice of norm. For our simulations in Section \ref{Sec:max-linear} and data examples in Section \ref{Sec:data} we use the Euclidean norm $\|\cdot \|_2$.

\subsection{$k$-means and spherical $k$-means}
The $k$-means clustering procedure is a way to identify distinct groups within a population. The name was first introduced in \cite{MacQueen67} although the ideas behind the algorithm dated back further, see \cite{Bock08}. The motivation is to identify cluster centers such that distances of the observations to their nearest cluster centers are minimized. Accordingly, all observations which are closest to the same cluster center are viewed as belonging to the same group. 

In the following, let $d:\mathbb{R}^d \times \mathbb{R}^d \to [0,\infty)$ be a distance function or, more generally, a dissimilarity function in $\mathbb{R}^d$ (see \cite{ganetal07}, Chapter 6). For a probability measure $P$ on $\mathbb{B}(\mathbb{R}^d)$ and a set $A=\{\mathbf{a}_1, \ldots, \mathbf{a}_k\},\ \mathbf{a}_i \in \mathbb{R}^d$ for $i=1, \ldots, k$ and $k \in \mathbb{N}$, one can introduce the averaged distance from any observation to the closest element of $A$ as
\begin{equation}
\label{Eq:Wdist}
W(A, P):= \int \min_{\mathbf{a} \in A} d(\mathbf{x},\mathbf{a})P(d\mathbf{x}) \in [0, \infty]. 
\end{equation} 
For given $P$ and $k$, a set $A_k$ which minimizes $W(A,P)$ among all $A$ with $|A| \leq k$ can be seen as a set of theoretical cluster centers. Note that the set may not necessarily be unique, in which case no clear cluster centers can be identified. 

If we replace $P$ by its sample version $\hat{P}_n$, (i.e. the measure that places mass $1/n$ on each observation $\mathbf{x}_1, \ldots, \mathbf{x}_n$ of a sample) in \eqref{Eq:Wdist}, and derive an accordingly optimal set $\hat{A}_k^n$, its components minimize the sum of the distances from every observation to its nearest cluster center. While the original version of $k$-means uses the Euclidean distance, several alternatives to the algorithm with different choices for $d$ have been suggested. Since our central interest is in the spectral measure and a sample from it has only mass on the unit sphere $\mathbb{S}^{d-1}_+$, it seems natural to measure the distance between two points by their angle, which is then in fact independent of the chosen norm. Therefore we make use of the spherical $k$-means procedure of \cite{DhillonModha01}, which measures differences in terms of angular dissimilarity. We thus set
\begin{equation}\label{Eq:dphi} d(\mathbf{x}, \mathbf{y})=d_{\varphi}(\mathbf{x}, \mathbf{y}):=1-\cos(\mathbf{x},\mathbf{y})=1-\frac{\langle \mathbf{x}, \mathbf{y}\rangle}{\|\mathbf{x}\|_2\|\mathbf{y}\|_2}=1-\frac{\sum_{i=1}^d x_iy_i}{\sqrt{\sum_{i=1}^d x_i^2}\sqrt{\sum_{i=1}^d y_i^2}}, \;\; \mathbf{x}, \mathbf{y} \in \mathbb{R}^d
\end{equation}
or simply $d_{\varphi}(\mathbf{x}, \mathbf{y})=1-\langle\mathbf{x},\mathbf{y}\rangle$ if we restrict $\mathbf{x}, \mathbf{y}$ to the Euclidean unit sphere. Since the optimal cluster centers under this dissimilarity are then determined by their direction only, for $P$ living on $\mathbb{S}_+^{d-1}$ for a chosen norm, the cluster centers will be placed on $\mathbb{S}_+^{d-1}$ by convention, so that we can interpret our cluster centers in the same way as we interpret the projections of extreme observations. Note that in order to allow for more flexibility in the choice of a suitable distance measure, the results in Section \ref{Sec:main} only assume that $d:\mathbb{S}_+^{d-1} \times \mathbb{S}_+^{d-1} \to [0,\infty)$ is a continuous function such that a unique minimizing set in \eqref{Eq:Wdist} exists.

It should be noted that finding the optimal cluster centers for a given $P$ can be an $NP$-hard problem (see \cite{Mahajanetal12}) and the known iterative algorithms often depend crucially on the initial cluster centers, see \cite{BradleyFayyad98}. For the examples and simulations in Sections \ref{Sec:max-linear} and \ref{Sec:data} we rely on the \texttt{R}-package \texttt{skmeans} by \cite{Horniketal12}, which provides short run-times and stable results.


\section{The main result}\label{Sec:main}
In this section we formally introduce our estimation procedure which will, for a given sample, provide a set of empirical cluster centers. Each center can then be interpreted as a ``dependence prototype'' for a particular class of an extremal event. In brief, our procedure looks as follows:
\begin{enumerate}
	\item[1.] With the help of the empirical distribution function, transform a sample from the distribution of $\mathbf{X}$ into (approximately) a sample from $\mathbf{Y}$ as in \eqref{Eq:trans}.
	\item[2.] Choose a fraction of the latter sample that only keeps the transformed observations with largest norm.
	\item[3.] For the chosen subsample, project the transformed observations onto the corresponding unit sphere.
	\item[4.] Apply a spherical $k$-means procedure to the projected observations.
\end{enumerate}
More details about the steps can be found below. Note that steps 1.-3.\ in the above procedure are a way of generating a ``pseudo-sample" from the spectral measure $S$ of standardized observations. But there exist many different ways of statistical inference for $S$ both for standardized and non-standardized data, from non-parametric (e.g.\ \cite{Einmahletal01}, \cite{EinmahlSegers09}), over semiparametric (e.g.\ \cite{Einmahletal97}) to fully parametric procedures (e.g.\ \cite{ColesTawn91}). The following theorem is therefore formulated in a way such that it holds for any estimator of the spectral measure as long as it is weakly or strongly consistent. 

\begin{theorem}\label{Th:main}
	Assume that $S$ is a probability measures on $\mathbb{B}(\mathbb{S}^{d-1}_+)$ and that $S_n, n \in \mathbb{N}$, is a sequence of random probability measures on $\mathbb{B}(\mathbb{S}^{d-1}_+)$ on a common probability space $(\Omega, \mathcal{A}, P)$. Furthermore, assume that $d:\mathbb{S}_+^{d-1} \times \mathbb{S}_+^{d-1} \to [0,\infty)$ is a continuous function.
	
    For each $S_n$ and a given value of $k \in \mathbb{N}$ denote a random set which minimizes 
	$$
	W(A, S_n):= \int_{\mathbb{S}^{d-1}_+} \min_{a \in A} d(x,a)S_n(dx) 
	$$
	among all sets $A$ with at most $k$ elements from $\mathbb{S}^{d-1}_+$ by $A_k^n$. Accordingly, if we replace $S_n$ by $S$, denote the optimal set by $A_k$, and assume that for a given value of $k$, the set $A_k$ is uniquely determined.
	\begin{itemize}
	 \item[a)] If $\int_{\mathbb{S}_+^{d-1}}f(x)S_n(dx) \to \int_{\mathbb{S}_+^{d-1}}f(x)S(dx), n \to \infty,$ in probability for all continuous functions $f:\mathbb{S}_+^{d-1} \to \mathbb{R}$, then $A_k^n$ converges in probability to $A_k$ as $n \to \infty$. 
	 \item[b)] If $\int_{\mathbb{S}_+^{d-1}}f(x)S_n(dx) \to \int_{\mathbb{S}_+^{d-1}}f(x)S(dx), n \to \infty,$ almost surely for all continuous functions $f:\mathbb{S}_+^{d-1} \to \mathbb{R}$, then $A_k^n$ converges almost surely to $A_k$ as $n \to \infty$.
	 	\end{itemize}
\end{theorem}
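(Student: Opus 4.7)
The plan is a Pollard-style consistency argument that exploits the compactness of the sphere and an equicontinuity estimate to upgrade the pointwise convergence of $S_n$ against continuous test functions into uniform convergence of the $k$-means objective. First I would reparametrize: identify each candidate set $A\subset\mathbb{S}_+^{d-1}$ of cardinality at most $k$ with a tuple $\mathbf{a}=(a_1,\dots,a_k)\in(\mathbb{S}_+^{d-1})^k$ (allowing repetitions), set $g_{\mathbf{a}}(x):=\min_{1\le i\le k}d(x,a_i)$, and define
\[
\Phi(\mathbf{a}):=\int_{\mathbb{S}_+^{d-1}} g_{\mathbf{a}}(x)\,S(dx),\qquad \Phi_n(\mathbf{a}):=\int_{\mathbb{S}_+^{d-1}} g_{\mathbf{a}}(x)\,S_n(dx).
\]
Since $g_{\mathbf{a}}$ is a finite minimum of continuous functions, the hypothesis yields $\Phi_n(\mathbf{a})\to\Phi(\mathbf{a})$ in probability (respectively almost surely) for each fixed $\mathbf{a}$.

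Second, I would upgrade this to uniform convergence over the compact tuple space $(\mathbb{S}_+^{d-1})^k$. The function $d$ is uniformly continuous on the compact product $\mathbb{S}_+^{d-1}\times\mathbb{S}_+^{d-1}$ with some modulus $\omega$; combined with the elementary inequality $|\min_i u_i-\min_i v_i|\le\max_i|u_i-v_i|$, this implies that whenever $\max_i\|a_i-a_i'\|\le\varepsilon$ one has $\sup_x|g_{\mathbf{a}}(x)-g_{\mathbf{a}'}(x)|\le\omega(\varepsilon)$, and hence
\[
|\Phi(\mathbf{a})-\Phi(\mathbf{a}')|\vee|\Phi_n(\mathbf{a})-\Phi_n(\mathbf{a}')|\le\omega(\varepsilon),
\]
a bound that is uniform in $n$. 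Covering $(\mathbb{S}_+^{d-1})^k$ by a finite $\varepsilon$-net $T_\varepsilon$ then gives
\[
\sup_{\mathbf{a}}|\Phi_n(\mathbf{a})-\Phi(\mathbf{a})|\le \max_{\mathbf{a}\in T_\varepsilon}|\Phi_n(\mathbf{a})-\Phi(\mathbf{a})|+2\omega(\varepsilon).
\]
In case (a) the maximum on the right vanishes in probability, and letting $\varepsilon\to 0$ along a sequence yields uniform convergence in probability; in case (b) I would choose $T_\varepsilon$ inside a fixed countable dense subset of $(\mathbb{S}_+^{d-1})^k$, so that the countably many exceptional null sets unite to a single null event and the uniform convergence holds almost surely.

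Finally, argmin consistency follows by compactness and uniqueness. Suppose $A_k^n$ does not converge to $A_k$; by compactness of the tuple space, extract a subsequence along which $A_k^{n'}\to A^*$ with $A^*\ne A_k$. Uniform convergence combined with the optimality of $A_k^{n'}$ gives
\[
\Phi(A^*)=\lim_{n'}\Phi_{n'}(A_k^{n'})\le\lim_{n'}\Phi_{n'}(A_k)=\Phi(A_k),
\]
so $A^*$ is also a minimizer, contradicting the assumed uniqueness of $A_k$. In case (b) this is carried out pointwise outside the single null set, yielding almost sure convergence; in case (a) the standard subsequence-in-probability principle (every subsequence has a further subsequence along which the uniform bound, and hence $A_k^n\to A_k$, holds almost surely) promotes the conclusion to convergence in probability. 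The main obstacle I expect is locating the uniform modulus $\omega$ in the second step so that the pointwise hypothesis can be bootstrapped over the uncountable parameter space; once that equicontinuity is in hand, the rest is a routine compactness-plus-uniqueness argument.
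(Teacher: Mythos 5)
Your proposal is correct and follows essentially the same Pollard-style route as the paper: both establish uniform convergence of the clustering objective over all candidate sets via compactness, a finite net, and the equicontinuity coming from the (uniform) continuity of $d$, and then deduce argmin consistency from the assumed uniqueness of $A_k$. Your treatment is somewhat more explicit on two points the paper leaves implicit (taking the nets from a countable dense set to get a single null event in part (b), and the subsequence-in-probability principle in part (a)), but the underlying argument is the same.
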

\begin{remark}\label{rem:theoremremark} In the above theorem, the convergence of sets is formally meant in the Hausdorff distance, but since all involved sets have only finitely many elements, it implies pointwise convergence of elements after a suitable reordering.
\end{remark}	

\begin{proof}[Proof of Theorem \ref{Th:main}] 
	The argumentation is very similar to the one used in \cite{Pollard82}, and the crucial ingredients for the proof are the continuity of $d$ and the compactness of $\mathbb{S}^{d-1}_+$.  First, the assumption a) or b) implies that for a fixed set $A$ we have $W(A,S_n)\to W(A,S)$ in probability or almost surely, respectively, since $d$ is continuous. For a finite number of sets $A^1, \ldots, A^m$ with at most $k$ elements from $\mathbb{S}_+^{d-1}$ the convergence
	\begin{equation}\label{Eq:uniformdisc} \sup_{i=1, \ldots, m}|W(A^i,S_n)-W(A^i,S)| \to 0, \;\;\; n \to \infty, 
	\end{equation}
	in the respective mode of convergence follows then immediately. Due to continuity of $d$ and its compact support we can furthermore find for each $\delta>0$ an $m \geq 1$ and sets $A^1, \ldots, A^m$ with at most $k$ elements from $\mathbb{S}_+^{d-1}$ such that 
	$$ \min_{i=1, \ldots, m} \sup_{x \in \mathbb{S}_+^{d-1}} |\min_{a \in A} d(x,a)-\min_{a \in A^i}d(x,a)|<\delta $$
        for all $A$ with at most $k$ elements from $\mathbb{S}_+^{d-1}$. This in turn implies that 
        \begin{equation*}\label{Eq:uniformcont} \sup_{A \subset \mathbb{S}_+^{d-1}, |A| \leq k} \min_{i=1, \ldots, m} |W(A,P)-W(A^i,P)|<\delta
        \end{equation*}
        for all probability measures $P$ on $\mathbb{B}(\mathbb{S}_+^{d-1})$ and therefore
        $$ \sup_{A \subset \mathbb{S}_+^{d-1}, |A| \leq k}|W(A,S_n)-W(A,S)|\leq \sup_{i=1, \ldots, m}|W(A^i,S_n)-W(A^i,S)|+2 \delta. $$
        Convergence \eqref{Eq:uniformdisc} then implies
	$$ \sup_{A \subset \mathbb{S}_+^{d-1}, |A| \leq k}|W(A,S_n)-W(A,S)| \to 0 $$
	in probability or almost surely, respectively. Thereby we also have $|W(A_k^n,S_n)-W(A_k^n,S)|\to 0$ and thus $W(A_k^n,S_n) \to W(A_k,S)$ for our optimal sets $A_k^n$ and $A_k$.
	Due to the continuity of $d$ and its compact support, the assumption of a unique optimal set $A_k$ implies that for each $\delta>0$ there exists an $\epsilon>0$ such that $|W(\tilde{A},S)-W(A_k,S)|<\epsilon$ implies $d_{\mbox{\scriptsize Hausdorff}}(\tilde{A}, A_k)<\delta$. This finally implies that $d_{\mbox{\scriptsize Hausdorff}}(A_k^n, A_k) \to 0$ in the respective mode of convergence.  
\end{proof}

Since the idea of our approach is to detect general patterns without relying on a particular model, we choose for the rest of the analysis a straightforward non-parametric estimator of the spectral measure of standardized observations which is a natural empirical counterpart to \eqref{Eq:spectral} and a slight modification of the estimator introduced in \cite{Einmahletal01}. 

Note that the spectral measure of $\mathbf{X}$ is defined in terms of the random vector $\mathbf{Y}$ from \eqref{Eq:trans}, but that we do not know the marginal distributions $F_i, 1 \leq i \leq  d$. A solution to this is to replace $F_i$ by the left-continuous version of the empirical distribution function 
 $$ F_{i,n}(x)=\frac{1}{n}\sum_{j=1}^n \mathds{1}_{\{X_i^j < x\}}, \;\;\; x \in \mathbb{R}, $$ 
 and transform the observations to
 \begin{equation}\label{Eq:ranktransform} \hat{\mathbf{Y}}_j=(\hat{Y}_1^j, \ldots, \hat{Y}_d^j), \;\;\; \mbox{ with } \hat{Y}_i^j := (1-F_{i,n}(X_i^j))^{-1}.
 \end{equation}

 The empirical counterpart of \eqref{Eq:spectral} then motivates the estimator
  \begin{equation}\label{Eq:empspec} \hat{S}_n(B):=\frac{\sum_{j=1}^n \mathds{1}_{\left\{\|\hat{\mathbf{Y}}_j\| \geq \frac{n}{l_n},\frac{\hat{\mathbf{Y}}_j}{\|\hat{\mathbf{Y}}_j\|} \in B\right\}}}{\sum_{j=1}^n \mathds{1}_{\left\{\|\hat{\mathbf{Y}}_j\| \geq \frac{n}{l_n} \right\}}},
  \end{equation}
  for Borel subsets $B$ of $\mathbb{S}^{d-1}_+$, where $l_n \in \mathbb{N}$. Note that due to their definition the observed components of the $\hat{\mathbf{Y}}_i's$ will have values in $\{1,n/(n-1), \ldots, n/2, n\}$. Therefore, the value of $l_n$ should grow with $n$ in order to obtain a consistent estimator, but the growth rate should also not be too fast in order to catch only the extremal observations. Proposition \ref{Prop:ourest} below gives necessary assumptions on $l_n$ for the weak and strong consistency of this estimator.  In a second step, for a chosen value of $l_n$, we can then apply a spherical $k$-means algorithm to $\hat{S}_n$, i.e.\ to the selected projections of extreme observations onto the unit sphere. The following proposition shows that the resulting estimators are consistent.

\begin{proposition}\label{Prop:ourest}
	Assume that $\mathbf{X}_i=(X_1^i, \ldots, X_d^i), i \in \mathbb{N},$ are i.i.d.\ copies of a vector $\mathbf{X}$, such that the transformed vector $\mathbf{Y}$ from \eqref{Eq:trans} satisfies \eqref{Eq:spectral} with spectral measure $S$. For $S_n=\hat{S}_n$ as in \eqref{Eq:empspec}, define the sets $A_k^n$ and $A_k$ as in Theorem \ref{Th:main} and assume that the set $A_k$ is uniquely determined for a given value of $k$. Then
	\begin{itemize}
	\item[a)] if $l_n/n \to 0$ and $l_n \to \infty$ the sets $A_k^n$ converge to $A_k$ in probability;
	\item[b)] if $l_n/n \to 0$ and $l_n/\log(\log(n)) \to \infty$ the sets $A_k^n$ converge to $A_k$ almost surely.
	\end{itemize}
	
\end{proposition}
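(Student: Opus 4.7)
The plan is to verify the hypothesis of Theorem~\ref{Th:main} for $S_n = \hat{S}_n$. Since $\mathbb{S}_+^{d-1}$ is compact, it suffices to show that for every continuous $f:\mathbb{S}_+^{d-1} \to \mathbb{R}$, $\int f\, d\hat{S}_n \to \int f\, dS$ in probability for part (a) and almost surely for part (b). The proposition will then follow immediately from Theorem~\ref{Th:main}.

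I would write $\int f\, d\hat{S}_n = N_n(f)/D_n$, where $N_n(f) = \sum_{j=1}^n f(\hat{\mathbf{Y}}_j/\|\hat{\mathbf{Y}}_j\|)\mathds{1}\{\|\hat{\mathbf{Y}}_j\| \geq n/l_n\}$ and $D_n$ is the number of indices $j$ satisfying the threshold, and then proceed by two reductions. First, introduce the oracle transform $\mathbf{Y}_j = (1/(1-F_1(X_1^j)), \ldots, 1/(1-F_d(X_d^j)))$ and the analogous estimator $\tilde{S}_n$ built from the true $\mathbf{Y}_j$. Uniform consistency of the marginal empirical distribution functions (Glivenko-Cantelli for part (a); the Chung-Smirnov law of the iterated logarithm for part (b)) implies that each ratio $\hat{Y}_i^j/Y_i^j$ tends to $1$ uniformly over indices $j$ selected by the threshold, which combined with uniform continuity of $f$ on the compact sphere gives $\int f\, d\hat{S}_n - \int f\, d\tilde{S}_n \to 0$ in the respective mode.

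Second, for the oracle version the summands are i.i.d.\ and bounded. The tail behaviour implied by \eqref{Eq:spectral} and \eqref{Eq:expmeasure} gives
$$
\frac{n}{l_n}\, E\!\left[f(\mathbf{Y}/\|\mathbf{Y}\|)\, \mathds{1}\{\|\mathbf{Y}\| \geq n/l_n\}\right] \longrightarrow c \int f\, dS, \qquad n\to\infty,
$$
for some constant $c>0$, so $E[\tilde{N}_n(f)] \sim c\, l_n \int f\, dS$ and $E[\tilde{D}_n] \sim c\, l_n$. Under the hypothesis of part (a), boundedness of the summands makes the variance of $\tilde{N}_n(f)$ of order $l_n$, so Chebyshev yields $\tilde{N}_n(f)/l_n \to c \int f\, dS$ in probability; the same argument for $\tilde{D}_n$ together with Slutsky's theorem delivers the ratio convergence.

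The main obstacle is the almost sure statement in part (b), where a direct strong law does not apply because the truncation threshold changes with $n$. My plan is to apply a Bernstein-type exponential inequality to the centered sum, giving $P(|\tilde{N}_n(f) - E[\tilde{N}_n(f)]| \geq \varepsilon l_n) \leq 2\exp(-c\varepsilon^2 l_n)$, and then to upgrade to almost sure convergence via a subsequence-blocking argument in the spirit of Hartman-Wintner: the condition $l_n/\log\log(n) \to \infty$ is precisely what is needed to make these tail probabilities summable along a geometric subsequence $n_k = 2^k$, so that Borel-Cantelli combined with monotonicity of the relevant counts between blocks yields almost sure convergence for all $n$. This LIL-type bookkeeping closely parallels the strong-consistency argument in \cite{Einmahletal01}; once it is in place, combining the oracle convergence with the rank-transform reduction and invoking Theorem~\ref{Th:main} completes the proof.
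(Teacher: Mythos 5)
Your overall strategy --- verifying the hypothesis of Theorem \ref{Th:main} by proving weak convergence of $\hat S_n$ to $S$ --- is the same as the paper's, and your treatment of the oracle estimator built from the true $\mathbf{Y}_j$ (Chebyshev for part (a); Bernstein plus geometric blocking for part (b), with $l_n/\log\log n\to\infty$ making the block tail bounds summable) is sound in outline. The genuine gap is in your first reduction, the passage from the rank-transformed $\hat{\mathbf{Y}}_j$ to the oracle $\mathbf{Y}_j$. Glivenko--Cantelli and Chung--Smirnov control the \emph{absolute} error $\sup_x|F_{i,n}(x)-F_i(x)|$, of order $n^{-1/2}$ (resp.\ $\sqrt{\log\log n/n}$), whereas what you need is the \emph{relative} error of $1-F_{i,n}$ at tail level $l_n/n$: dividing the absolute bound by $1-F_i(x)\asymp l_n/n$ gives a relative error of order $\sqrt{n}/l_n$ (resp.\ $\sqrt{n\log\log n}/l_n$), which tends to zero only if $l_n$ grows faster than $\sqrt n$. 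Under the stated hypotheses $l_n$ may grow arbitrarily slowly, so your tools do not close this step. The correct input is consistency of the \emph{weighted} (tail) empirical process, equivalently of intermediate order statistics, for which $l_n\to\infty$ and $l_n/\log\log n\to\infty$ are exactly the right conditions for the weak and strong statements respectively; this is precisely what the paper imports from \cite{Huang92} and \cite{Qi97} via the rank-based estimator of the stable tail dependence function, and then upgrades to convergence of the empirical measure of the points $\hat{\mathbf{Y}}_j\, l_n/n$ on sets bounded away from the origin, following \cite{Einmahletal01}.

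A second, related flaw: the claim that $\hat Y_i^j/Y_i^j\to 1$ uniformly over the selected indices is false even granting sharp empirical-process bounds. For the $r$-th largest observation in coordinate $i$ one has $\hat Y_i^j=n/r$ while $n\bigl(1-F_i(X_i^j)\bigr)$ converges to a nondegenerate Gamma limit, so the ratio does not tend to $1$ for the top order statistics. These finitely many points are harmless because they carry mass $O(1/l_n)$ under $\hat S_n$, but it means the comparison must be made at the level of the empirical measures (a continuity-set/portmanteau argument, as in the paper's proof), not pointwise per observation. In short: the skeleton of your argument is right and the oracle half is fine, but the rank-transform half needs the intermediate-order-statistics machinery rather than classical uniform empirical-process bounds.
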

\begin{proof}
The proposition follows from Theorem \ref{Th:main} if we can verify that the sequence of random measures $S_n$ satisfies the corresponding assumptions. To see this, we argue similar to \cite{Einmahletal01}, Theorem 1. We start by looking at the estimator
$$ \hat{l}(x_1, \ldots, x_d):=\frac{1}{l_n}\sum_{i=1}^n \mathds{1}_{\{\exists j: \hat{Y}_j^i > \frac{n}{l_n x_j}\}} $$
for the so-called stable tail dependence function
\begin{equation}
\label{Eq:stdf}
l(x_1, \ldots, x_d):=\nu\{(u_1, \ldots, u_d) \in [0,\infty)^d: \exists j: u_j>1/x_j\}, 
\end{equation} 
with $\nu$ as introduced in \eqref{Eq:exponent}. For $l_n/n \to 0, l_n \to \infty$ the estimator converges in probability (see \cite{Huang92}) and for $l_n/n \to 0, l_n/\log(\log(n)) \to \infty$ it converges almost surely (see \cite{Qi97}) for all $x_1, \ldots, x_d \in (0,\infty)^d$. This pointwise convergence can be extended to show that
$$ \frac{1}{l_n}\sum_{i=1}^n \mathds{1}_{\left\{\|\hat{\mathbf{Y}}_i\|_\infty>\frac{n}{l_n c}, \,\hat{\mathbf{Y}}_i\frac{l_n}{n} \in \hat{B} \right\}} \to \nu\left\{\mathbf{u} \in [0,\infty)^d: \|\mathbf{u}\|_\infty>1/c, \mathbf{u} \in \hat{B} \right\}$$
for each $c>0$ and continuity set $\hat{B} \subset [0,\infty)^d$ in the respective mode of convergence, see the proof of Theorem 1 in \cite{Einmahletal01} for details. Due to equivalence of norms there exists a $c_0>0$ such that $\|\mathbf{x}\|>1$ implies $\|\mathbf{x} \|_{\infty}>1/c_0$ for the chosen norm $\| \cdot \|$. Set now $\hat{B}=\{\mathbf{u} \in [0,\infty)^d: \|\mathbf{u}\|>1, \mathbf{u}/\|\mathbf{u}\| \in B\}$ for an $S$-continuity set $B \subset \mathbb{S}^{d-1}_+$ and $A=\{\mathbf{u} \in [0,\infty)^d: \|\mathbf{u}\|>1\}$ so that
$$ \hat{S}_n(B)=\frac{\sum_{i=1}^n \mathds{1}_{\left\{\|\hat{\mathbf{Y}}_i\|_\infty>\frac{n}{l_n c_0}, \,\hat{\mathbf{Y}}_i\frac{l_n}{n} \in \hat{B} \right\}}}{\sum_{i=1}^n \mathds{1}_{\left\{\|\hat{\mathbf{Y}}_i\|_\infty>\frac{n}{l_n c_0}, \,\hat{\mathbf{Y}}_i\frac{l_n}{n} \in A \right\}}}\to \frac{\nu\left\{\mathbf{u} \in [0,\infty)^d: \|\mathbf{u}\|_\infty>1/c_0, \mathbf{u} \in \hat{B} \right\}}{\nu\left\{\mathbf{u} \in [0,\infty)^d: \|\mathbf{u}\|_\infty>1/c_0, \mathbf{u} \in A \right\}}=S(B)$$
again in the respective mode of convergence. This implies the weak convergence either in probability or almost surely of $\hat{S}_n$ to $S$ (see again \cite{Einmahletal01}, Theorem 1) and thereby that the assumption a) or b), respectively, of Theorem \ref{Th:main} is satisfied.
\end{proof}


\section{Application to max-linear models}\label{Sec:max-linear}

The idea behind the decomposition of a spectral measure into $k$ clusters is motivated by the special case where the spectral measure is clearly concentrated around $k$ different centers.  An idealized example is provided by the max-linear model where a spectral measure has only masses on $k$ different points.  In the following we will demonstrate how our $k$-means procedure can be seen as an alternative way of estimating their model parameters.

A max linear model consists of $k$ different so-called \textit{factors} $\mathbf{a}_i=(a_1^i, \ldots, a_d^i) \in [0,\infty)^d, i=1, \ldots, k$ from which a random vector is generated by
\begin{equation}\label{Eq:maxlin} (X_1, \ldots, X_d)=\left(\max_{i=1, \ldots, k} a_1^i Z_i, \ldots, \max_{i=1, \ldots, k}a_d^i Z_i\right), 
\end{equation}
where $Z_1, \ldots, Z_k$ are i.i.d.\ random variables with the same heavy-tailed distribution. The most common choice for this distribution is a standard Fr\'{e}chet-distribution. Furthermore, one typically assumes that $\sum_{i=1}^k a_j^i=1$ for all $j=1, \ldots, d$ such that all margins of $\mathbf{X}$ are standard Fr\'{e}chet as well. 
 
 Looking at \eqref{Eq:maxlin} it is clear that the largest observations of $\mathbf{X}$ are due to a large observation of a $Z_i$ and therefore the factors $\mathbf{a}_i$ determine the possible directions of extremal observations. In fact one can show that the spectral measure $S$ concentrates on the points $\mathbf{s}_i=\mathbf{a}_i/\|\mathbf{a}_i\|$ with corresponding probabilities $p_i=\|\mathbf{a}_i\|/(\sum_{j=1}^k\|\mathbf{a}_j\|), 1 \leq i \leq k$.  On the other hand, for each discrete spectral measure with mass concentrated on $k$ points there exists a max-linear model with $k$ factors which results in this given spectral measure, see \cite{YuenStoev14}. It is also shown that any given dependence structure of extremes can be approximated arbitrarily well by spectral measures generated from max-linear models if one allows the number of factors to grow, see \cite{Fougeresetal13}. Furthermore, it was recently shown in \cite{Gissibl18, Gissibletal18, GissiblKlueppelberg18} that max-linear models also evolve from a natural modeling of extremal dependence generated from a directed acyclic graph of components, thereby allowing the modeling and detection of causality in extreme events. 
 
 Parameter estimation of max-linear models has proven to be a difficult task and previous approaches to estimate model parameters based on extremal observations had to address the fact that no spectral density exists which excludes standard maximum likelihood procedures. Instead, \cite{Einmahletal12}, \cite{Einmahletal16} and \cite{Einmahletal18} use a least squares estimator based on the stable tail dependence function to estimate parameters from extremal observations and \cite{YuenStoev14} construct a least squares estimator based on the joint distribution function and make use of all observations. In the following we illustrate how the $k$-means procedure serves as an alternative and effective way of inference for the max-linear models.
 
 From the above it follows that the discrete spectral measure, i.e., the points $\mathbf{s}_1, \ldots, \mathbf{s}_k \in \mathbb{S}^{d-1}_+$ on which the spectral measure $S$ is concentrated, and the corresponding probabilities $p_1, \ldots, p_k$, are an equivalent way of parametrizing a max-linear model. For such a spectral measure, it is clear that $W(A,S)$ as defined in \eqref{Eq:Wdist} is minimized by choosing $A=\{\mathbf{s}_1, \ldots, \mathbf{s}_k\}$ and $A$ is uniquely determined. For an estimator $S_n$ of $S$, which satisfies the assumptions from the previous chapter, the $k$-means cluster centers can therefore be seen as consistent estimators of $\mathbf{s}_1, \ldots, \mathbf{s}_k$. This consistency also implies that the percentages of points which are classified as belonging to cluster $i$ converge to the corresponding probability $p_i, 1 \leq i \leq k$. Especially if one is interested in using the max-linear model as an approximation for the largest observations only, the estimation of the $\mathbf{s}_i$'s and $p_i$'s can be seen as an alternative to the estimation of the components $\mathbf{a}_i$'s. 
 
 In order to compare the spherical $k$-means procedure with the previously mentioned approaches we set up a small simulation study. To this end, we first generate a random parameter constellation for a max-linear model with $d$ dimensions and $k$ factors. We then estimate the factor coefficients according to \cite{Einmahletal16} and \cite{Einmahletal18}, as provided by the $\texttt{R}$-package \cite{Kiriliouk16}, and \cite{YuenStoev14}, as provided by \cite{Yuen15}, and derive the resulting spectral measure. We then compare the estimators for $\mathbf{s}_1, \ldots, \mathbf{s}_k$ and $p_1, \ldots, p_k$ as derived from the estimated parameters of the max-linear models on one hand and on the other hand as derived directly by the $k$-means estimator $S_n^k$ which puts mass $\hat{p}_i$ in the cluster center $\hat{s}_i$, where 
 $$ \hat{p}_i=S_n(\{\mathbf{x} \in \mathbb{S}^{d-1}_+: \argmin_{j=1, \ldots, k}d_\varphi(\mathbf{x}, \mathbf{s}_j)=i\}), \;\;\; i=1, \ldots, k,$$
 i.e.\ the percentage of observations that are classified as belonging to cluster $i$. 
 The difference between the true spectral measure $S$ and the corresponding estimator is evaluated by two different evaluation criteria. For the estimated cluster centers, we first derive 
  $$ d_s(S_n,S):=\min_{\pi: \pi\, \mbox{\tiny is permutation of $\{1, \ldots, k\}$}}\sqrt{\sum_{j=1}^k\|\hat{s}_{\pi(j)}-s_j\|_2^2},$$
  which can be seen as a distance measure similar to the Hausdorff distance applied to finite sets, but taking into account all distances of the matched vectors instead of only the maximal one. This gives an idea about how well the estimator identifies possible extremal directions, but does not take into account their frequencies. To this end we also look at a metric on the space of probability measures, where we use the Wasserstein metric with $p=1$, which is defined as
  $$  W_1(S_n, S):= \inf_{P \in \Gamma(S,S_n)} \int_{\mathbb{S}^{d-1}_+\times\mathbb{S}^{d-1}_+} \|\mathbf{x}-\mathbf{y}\|_2 P(\mbox{d}\mathbf{x},\mbox{d}\mathbf{y}), $$
  where $\Gamma(S,S_n)$ is the set of all probability measures on $\mathbb{S}^{d-1}_+\times\mathbb{S}^{d-1}_+$ with first marginal $S$ and second marginal $S_n$. We use the \texttt{R}-package \texttt{transport}, see \cite{Schuhmacheretal19}, for evaluation of Wasserstein distances.
 
 We first assume $k$ to be known. For each combination of $d$ and $k$ we randomly generate 100 model specifications and for each model specification we generate 1000 observations. The random factors are generated as below, where $U_i, i \in \mathbb{N},$ stand for i.i.d.\ random variables with uniform distribution on $[0,1]$. We only state the first $k-1$ factors, as the last factor is always determined from the first ones by the standardization assumption.
 \begin{itemize}
  \item $d=4, k=2$: First factor is $(U_1,U_2,U_3,U_4)/2$.
  \item $d=4, k=6$: First five factors are $(U_1, U_2, U_3, U_4)/3$, $(U_5, 0, U_6, 0)/3$, $(0, U_7, 0, U_8)/3$, \linebreak $(U_9, U_{10}, 0, 0)/3$, $(0, 0, U_{11}, U_{12})/3$.
  \item $d=k=6$: First five factors are $(U_1, \ldots, U_6)/3$, $(0, U_7, 0, U_8, 0, U_9)/3$, $(U_{10}, 0, U_{11}, 0, U_{12},0)/3$, $(0, 0, 0, U_{13}, U_{14}, U_{15})/3$, $(U_{16}, U_{17}, U_{18}, 0, 0, 0)$.  
  \item $d=10, k=6$: First five factors are $(U_1, \ldots, U_{10})/2, (U_{11}, U_{12}, 0, \ldots , 0)/2$,\newline $(0, 0, U_{13}, U_{14},  0, \ldots , 0)/2$, $(0,0,0,0, U_{15}, U_{16}, 0, 0, 0, 0)/2$, $(0, \ldots, 0, U_{17}, U_{18}, U_{19}, U_{20})/2$. 
 \end{itemize}

 For the method from \cite{YuenStoev14} we use all observations, for the remaining only the 100 with largest norm. The grid for the estimator from \cite{Einmahletal18} includes all $d$-dimensional vectors with entries from the set $\{0, 1/3, 2/3, 1\}$ and exactly 2 non-zero entries. A finer grid would have implied very long run times. For those three estimators, the fact that there are 0's in the factors and knowledge of their positions is not used for the estimation, so there are $d \cdot (k-1)$ parameters to estimate. 

 \begin{table}[ht]
 \begin{center}
  \begin{tabular}{c | c c c c}
  Method & $d=4, k=2$ & $d=4, k=6$& $d=k=6$ & $d=10, k=6$\\
  \hline

 (spherical) $k$-means & 0.0562 (0.0276) & 0.4042 (0.2693) & 0.3376 (0.2050)
 & 0.3711 (0.2306) \\
 \cite{Einmahletal16} & 0.0728 (0.0415) & 0.5255 (0.3198) & 0.4295 (0.2506) & 0.4264 (0.2694) \\
 \cite{Einmahletal18} & 0.0605 (0.0309) & 0.4868 (0.3001) &  0.3971 (0.2271) & 0.4103 (0.2312) \\
 \cite{YuenStoev14} & 0.1370 (0.1905) & 1.1049 (0.2052) &  1.3808 (0.1541)  & 1.8755 (0.1582) \\
\end{tabular}
\caption{Comparison of $d_s(S_n,S)$ for different values of $d$ and $k$. Mean over 100 simulations, standard deviation in brackets.}
\label{Tab:dsvalues}
\end{center}
 \end{table}
 
  \begin{table}[ht]
 \begin{center}
  \begin{tabular}{c | c c c c}
  Method & $d=4, k=2$ & $d=4, k=6$& $d=k=6$ & $d=10, k=6$\\
  \hline
 (spherical) $k$-means & 0.0450 (0.0206) & 0.1310 (0.0285) & 0.1393
 (0.0302)
 & 0.1578 (0.0332) \\
 \cite{Einmahletal16} & 0.0629 (0.0351) &  0.1445 (0.0344) &  0.1464 (0.0328) & 0.1592 (0.0345) \\
 \cite{Einmahletal18} & 0.0458 (0.0219) &  0.1310 (0.0314) & 0.1386 (0.0329) & 0.1569 (0.0348) \\
 \cite{YuenStoev14} & 0.0725 (0.0977) & 0.3336 (0.0876) &  0.4860 (0.0625) &  0.6764 (0.0648) \\
\end{tabular}
\caption{Comparison of $W_1(S_n,S)$ for different values of $d$ and $k$. Mean over 100 simulations, standard deviation in brackets.}
\label{Tab:wasservalues}
\end{center}
 \end{table}
 
 The average values of $d_s(S_n,S)$ and $W_1(S_n,S)$ over all 100 realizations for different constellations are shown below in Tables~\ref{Tab:dsvalues} and \ref{Tab:wasservalues}, with the observed standard deviations in brackets. It can be seen that the locations of the points of mass of the spectral measure are most precisely estimated by the spherical $k$-means procedure. The accuracy in estimating the spectral measure is very similar for all methods except for the CRPS-method from \cite{YuenStoev14}, which was constructed for an overall good fit but with little focus on extremes. 
 
Since the number of factors is usually not known a priori, we also look at two misspecified models, where in both cases the model was fitted as if there were $k=3$ factors, but the true value of $k$ is 2 or 6. The random models are generated as described previously for the respective constellations of $d$ and $k$. 
   \begin{table}
 \begin{center}
  \begin{tabular}{c | c c }
  Method & $d=4, k=2$ & $d=4, k=6$\\
  \hline
 (spherical) $k$-means & 0.0504 (0.0226)& 0.2746 (0.0537) \\
 \cite{Einmahletal16} & 0.0571 (0.0270) & 0.3225 (0.0646) \\
 \cite{Einmahletal18} & 0.0532 (0.0230) & 0.2921 (0.0663) \\
 \cite{YuenStoev14} &  0.0755 (0.0649)  & 0.4230 (0.1039) \\  
\end{tabular}
\caption{Comparison of $W_1(S_n,S)$ for true model parameters $d=4, k=2$ and $d=4, k=6$ but with models fitted to $k=3$ instead. Mean over 100 simulations, standard deviation in brackets.}
\label{Tab:misspecwasservalues}
\end{center}
 \end{table}
 In Table~\ref{Tab:misspecwasservalues} we see the results for the two misspecified models as measured in the $W_1$-metric of estimated and true spectral measure. In the two examples, the spherical $k$-means procedure copes better with the fact that our model is misspecified. 
 
 Regarding the numerical implementation of the estimators \cite{Einmahletal16} and \cite{Einmahletal18} we noted in our simulations that the first depends for larger values of $d$ and $k$ heavily on the starting parameters of the algorithm, where we choose the starting value for factor parameters from the spherical $k$-means estimator, as also suggested in \cite{Einmahletal12}. The procedure for the estimator from \cite{Einmahletal18} has very long runtimes, even with the relatively coarse grid that we use. 
 
 We conclude from the simulations that the spherical $k$-means procedure is, for the chosen examples and metrics, superior or competitive to (and usually faster than) methods for estimating max-linear models if one is mainly interested in the resulting spectral measure of observations.


\section{Data examples}\label{Sec:data}

In the following we apply our procedure to three different data sets with dimensions 5, 30 and 38. For all data sets we observe that in each estimated cluster center there are many components with values close to 0 and only a few which have significantly positive entries. This hints at the phenomenon of asymptotic independence. In extreme value theory, two random variables $X,Y$ with distributions $F_X, F_Y$ are called \textit{asymptotically independent} if
 $$ \lim_{u \nearrow 1}P(X>F_X^{-1}(u)) | Y>F_Y^{-1}(u))=0,$$
(assuming that the limit exists) and \textit{asymptotically dependent} else. Detecting the groups of random variables which share asymptotic dependencies and classify them accordingly was the main aim of \cite{Chautru15}. Our analysis allows for a more gradual view on dependencies since looking at the estimated clusters and observed differences in cluster components gives an idea about strong and weak hints towards asymptotic dependence or independence. For groups of asymptotically dependent variables it furthermore allows to identify patterns within those groups.

\subsection{Air pollution data}
We start the analysis with a dataset of relatively small dimension, namely the air pollution data which has been analyzed in \cite{HeffernanTawn04}. It consists of daily measurements of five air pollutants in the city center of Leeds (U.K.), collected between 1994 and 1998 and split up into summer and winter months which gives 578 and 532 observations, respectively. The data is available via the \texttt{R}-package \texttt{texmex}, see \cite{texmex18}.  We apply the four-step procedure as outlined at the beginning of Section \ref{Sec:main}, with the transformation of marginals as described in \eqref{Eq:ranktransform}. For this data set, we use the 10\% of transformed observations with largest Euclidean norm, project them on the unit sphere and apply the spherical $k$-means procedure. For this and the other two data examples we use the command \texttt{skmeans} with method \texttt{pclust} and additional parameters \texttt{nruns = 1000, maxchains=100} from \cite{Horniketal12}.

\begin{figure}[H]
	\begin{centering}
	\includegraphics[height=6cm]{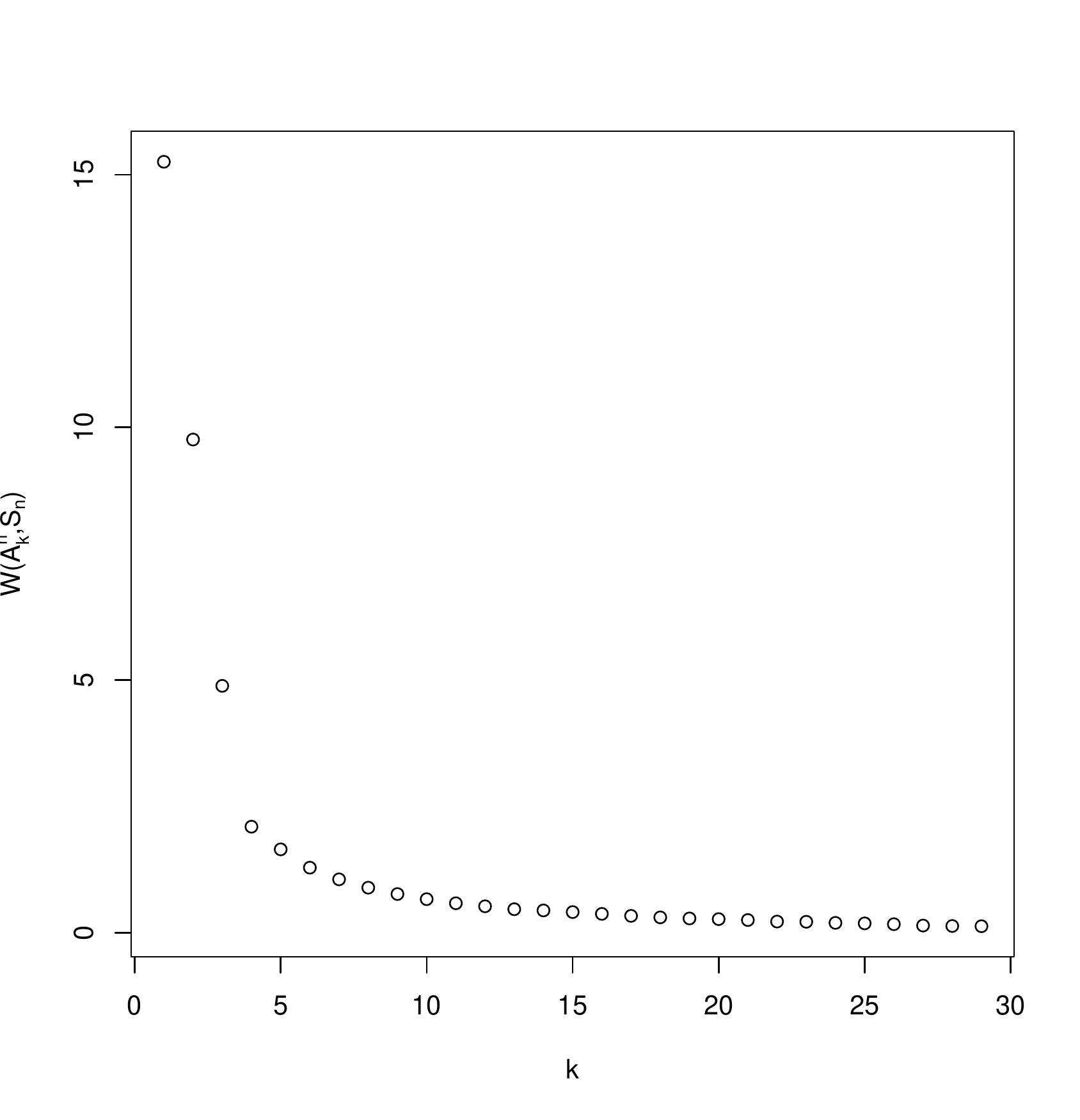}
	\includegraphics[height=6cm]{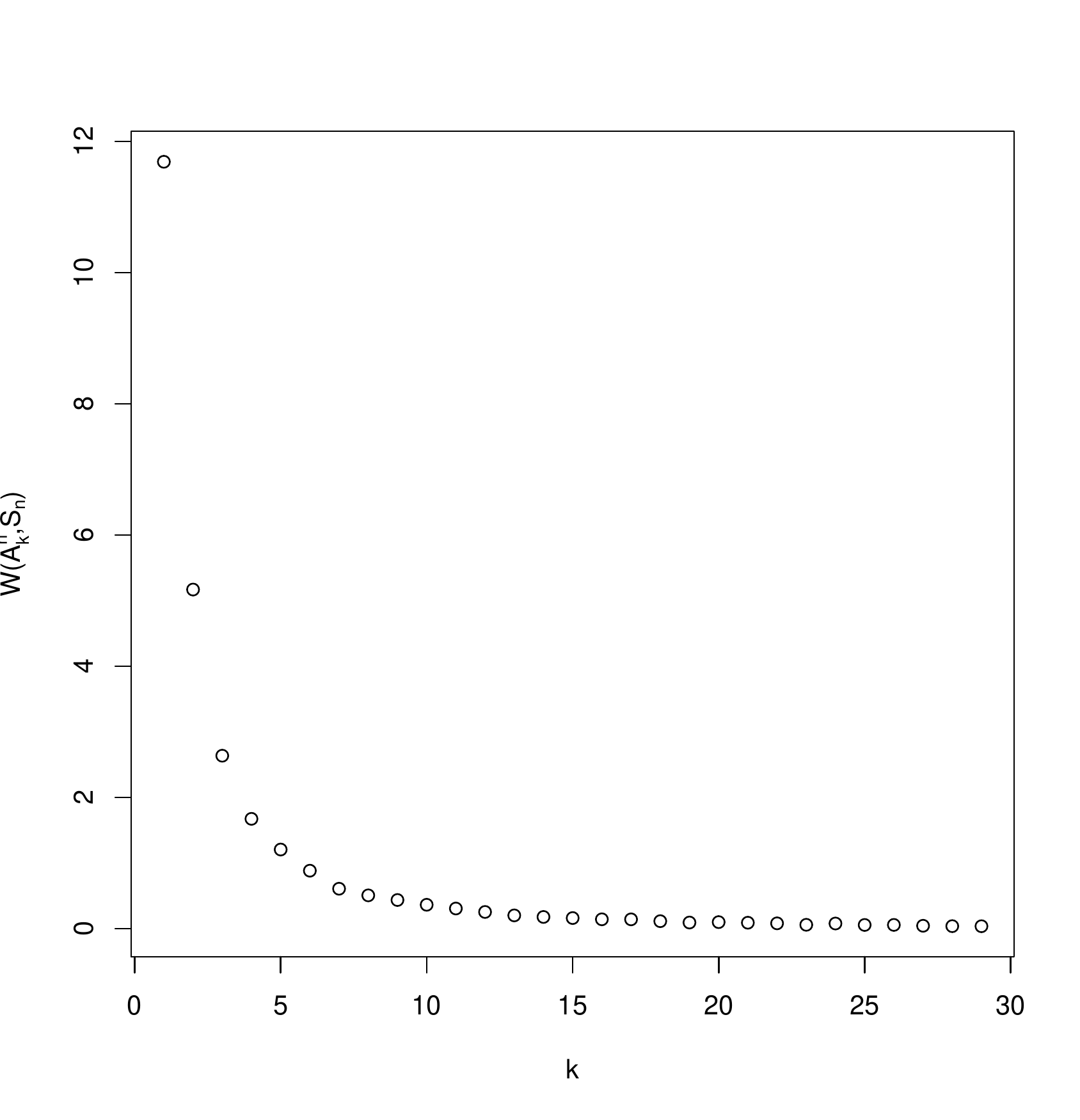}
	\caption{The value of $W(A_k^n,S_n)$ for different values of $k$ in the air pollution data sets. Left: summer data. Right: winter data.}
	\label{Fig:airpollution1}
	\end{centering}
\end{figure}

The first step is now to determine suitable values of $k$. A common way of doing this is creating a so-called ``elbow plot" by plotting the minimized distance $W(A_k,S_n)$ (recall from \eqref{Eq:Wdist}) against the number of clusters $k$, see Figure~\ref{Fig:airpollution1} for the summer and winter data. Note that $W(A_k,S_n)$ necessarily decreases with the increase of $k$. In the plot one usually looks for a $k$ such that for larger values the decrease becomes insignificant, but we note that there is no clear theoretical criterion for an optimal choice of $k$. Our goal is to use the algorithm as a tool to explore the structure in the data and we stress that it often makes sense to look at different values of $k$ in the analysis.

From the elbow plots of the summer and winter data we decide to pursue our analysis for both $k=4$ and $k=5$. The graphs in Figure~\ref{Fig:airpollution2} and Figure~\ref{Fig:airpollution3} illustrate the cluster centers for $k=4$ and $k=5$, respectively.  In order to provide visual comparison, we re-normalize each cluster center such that the maximum component is scaled up to 1, i.e.,
$$
(c_1^{(j)},\ldots,c_d^{(j)}) \rightarrow \left(\frac{c_1^{(j)}}{\max_h\{c_h^{(j)}\}},\ldots,\frac{c_d^{(j)}}{\max_h\{c_h^{(j)}\}}\right).
$$

\begin{figure}[h]
	\begin{center}
	\includegraphics[height=7cm]{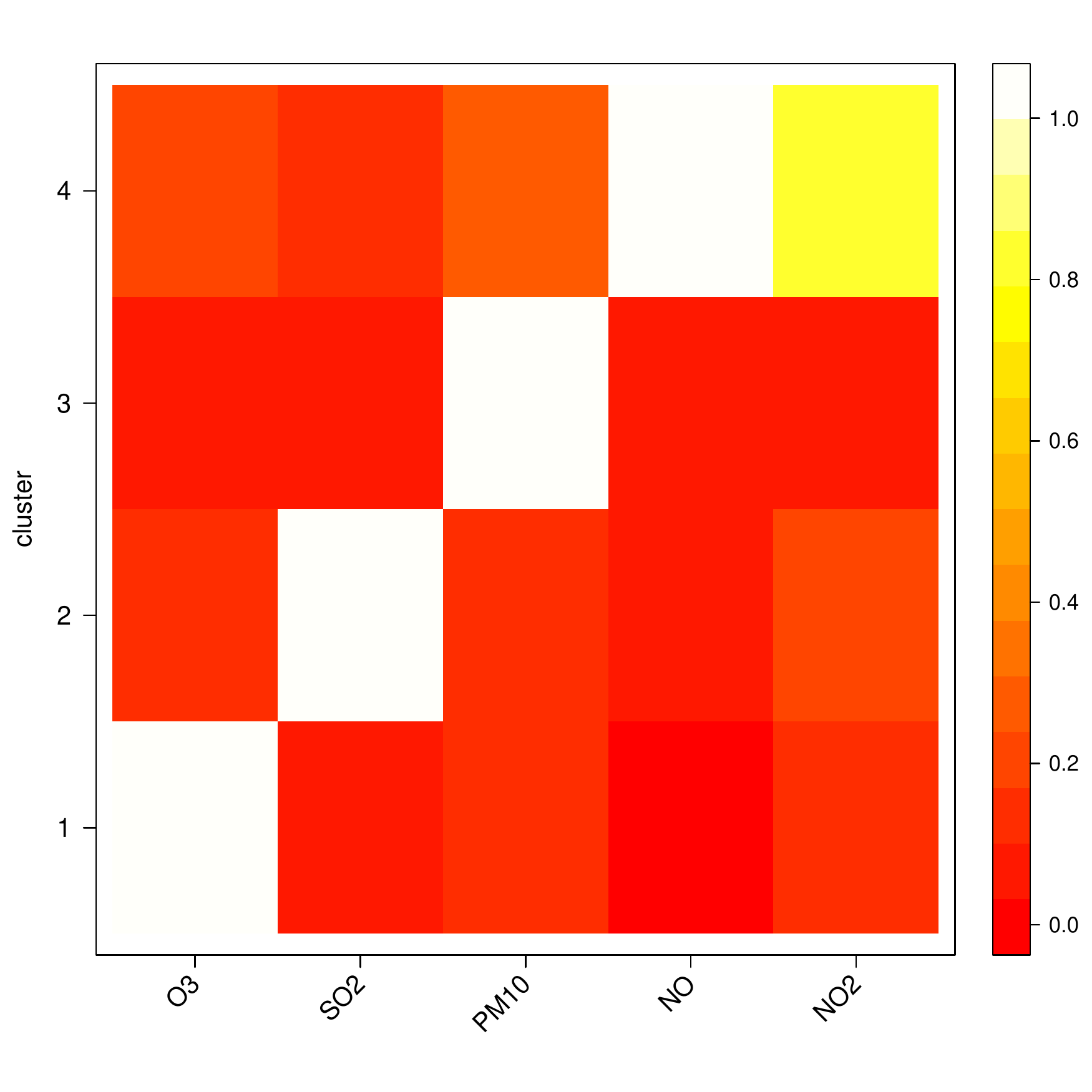}
	\includegraphics[height=7cm]{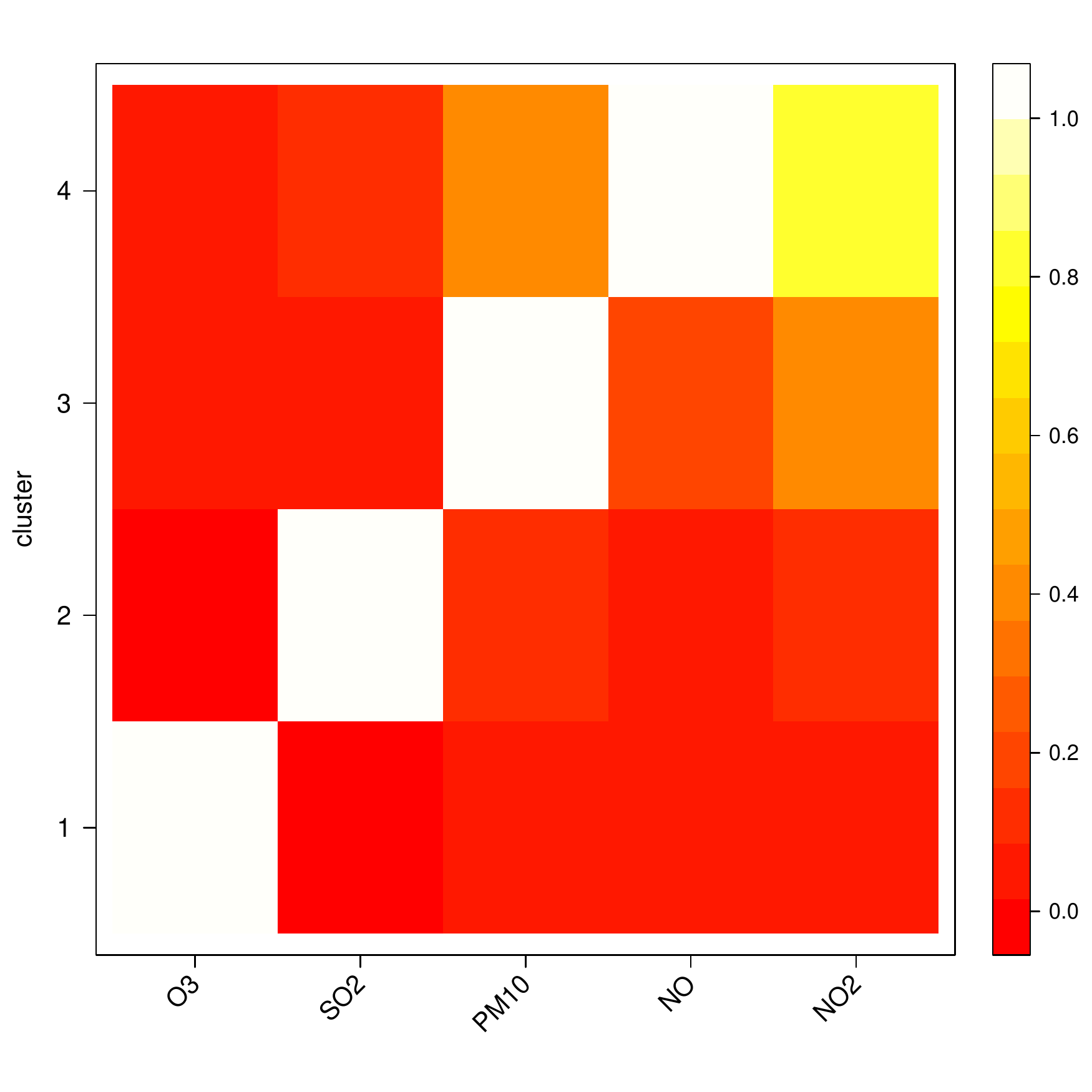}
	\caption{The $k$-means clustering result on the air pollution data for $k=4$. Left: summer data. Right: winter data.}
	\label{Fig:airpollution2}
    \end{center}
	\end{figure}
	\begin{figure}[h]
    \begin{center}
	\includegraphics[height=7cm]{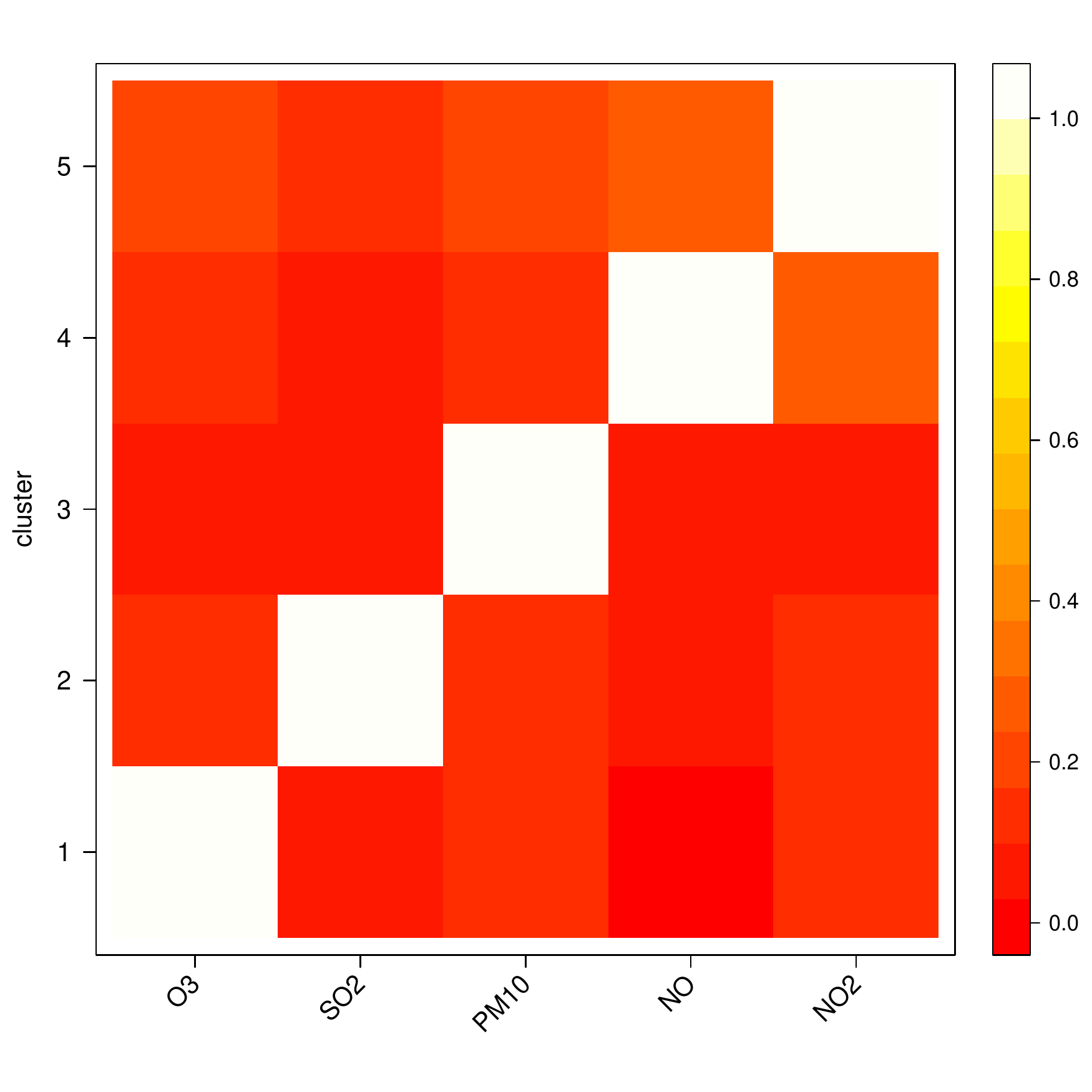}
	\includegraphics[height=7cm]{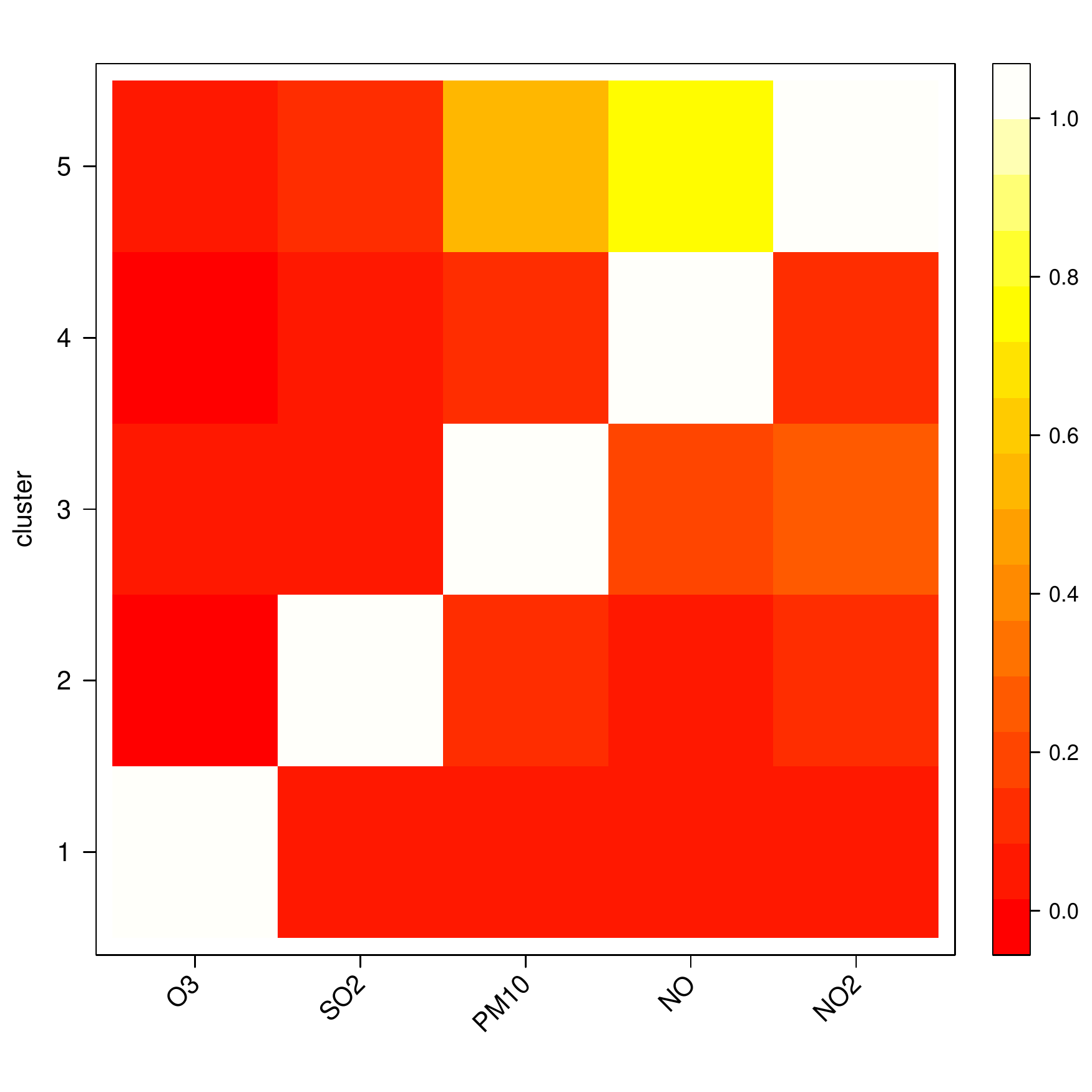}
	\caption{The $k$-means clustering result on the air pollution data for $k=5$. Left: summer data. Right: winter data.}
	\label{Fig:airpollution3}
	\end{center}
\end{figure}
 For $k=5$ we see that for the summer data each cluster has exactly one large component, hinting at asymptotically independence between all air pollutants. We note that due to the marginal transformation in \eqref{Eq:ranktransform} and the standardization property \eqref{Eq:standard} of the spectral measure $S$, the components of our projections on the unit sphere should all have approximately the same expected value. Therefore, each component should correspond to a large entry in at least one cluster center. As a result, we note that for $k=4$ there has to be at least one cluster where at least two components are large. We can interpret cluster 4 in the summer data for $k=4$ in the way that NO and NO$_2$ are the air pollutants which are most likely to occur at extreme levels simultaneously. Both for $k=4$ and for $k=5$ we can identify a tendency for particulate matter (PM$_{10}$), NO and NO$_2$ to occur jointly at extreme levels, but only in winter. This is in line with the conclusions in \cite{HeffernanTawn04} who state asymptotic independence for all components except for PM$_{10}$, NO and NO$_2$ in winter.

\subsection{Financial portfolio losses} \label{Subsec:portfolio}

In this example, we consider the `value-averaged' daily returns of 30 industry portfolios compiled and posted
as part of the Kenneth French Data Library.  The data in consideration span between 1950--2015 with $n=16694$ observations.  This is the same dataset as analyzed in \cite{CooleyThibaud16}, where dependencies in extreme losses were explored with a method related to principle component analysis. Here we attempt to recover more information using our method. Since we are interested in extremal losses we first multiply all returns by -1. After that we use the same procedure as for the previous data set, but this time we only look at the transformed observations with the largest 5\% of Euclidean norms.

\begin{figure}[ht]
	\centering
	\includegraphics[height=5cm]{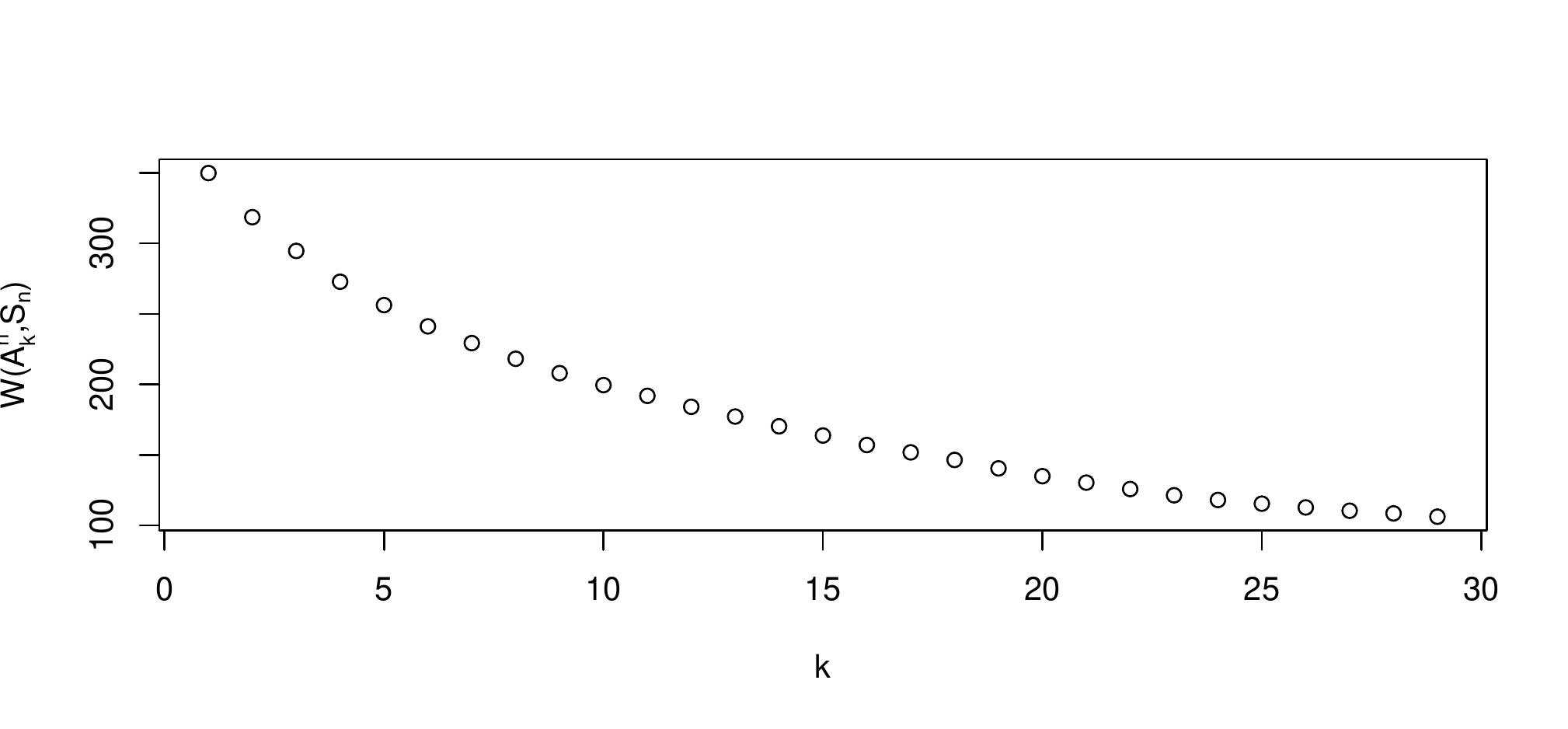}
	\caption{The value of $W(A_k^n,S_n)$ for different values of $k$ in the financial portfolio loss data. }
	\label{Fig:portelbow}
\end{figure}

We create again an ``elbow plot", see Figure~\ref{Fig:portelbow}. Here it is rather difficult to find a concrete value for $k$ so we compare the analysis for $k=5$ and $k=10$. 

\begin{figure}[h]
	\includegraphics[width=\textwidth]{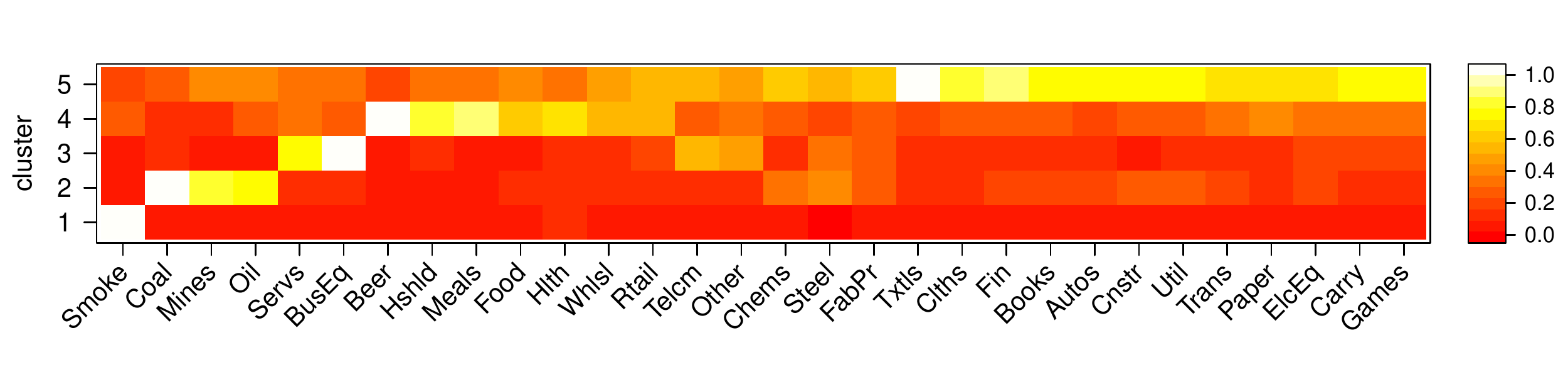}
	\includegraphics[width=.95\textwidth]{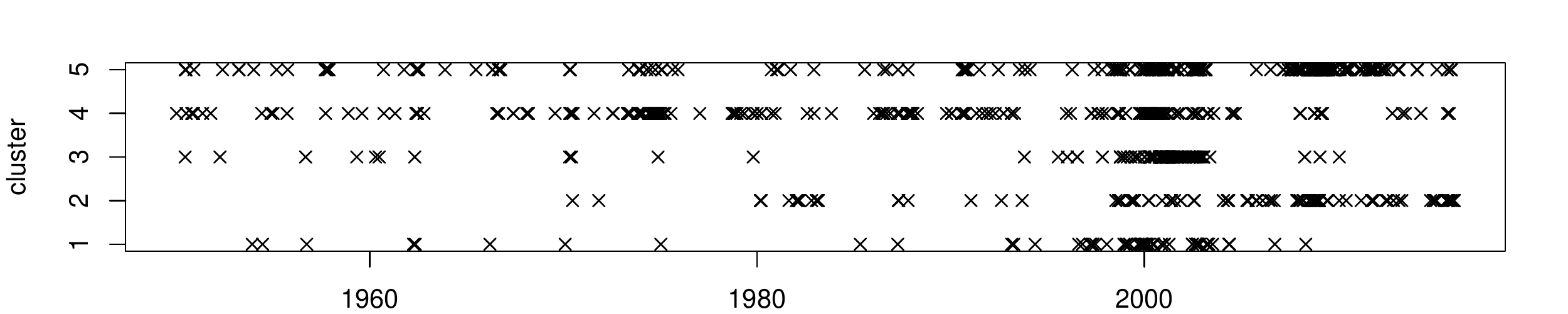}
	\caption{The $k$-means clustering result on the financial portfolio loss data for $k=5$. Top: cluster classification result vs.~time.  Bottom: re-normalized cluster centers illustrated.}
	\label{Fig:port1}
\end{figure}  
\begin{figure}
	\includegraphics[width=\textwidth]{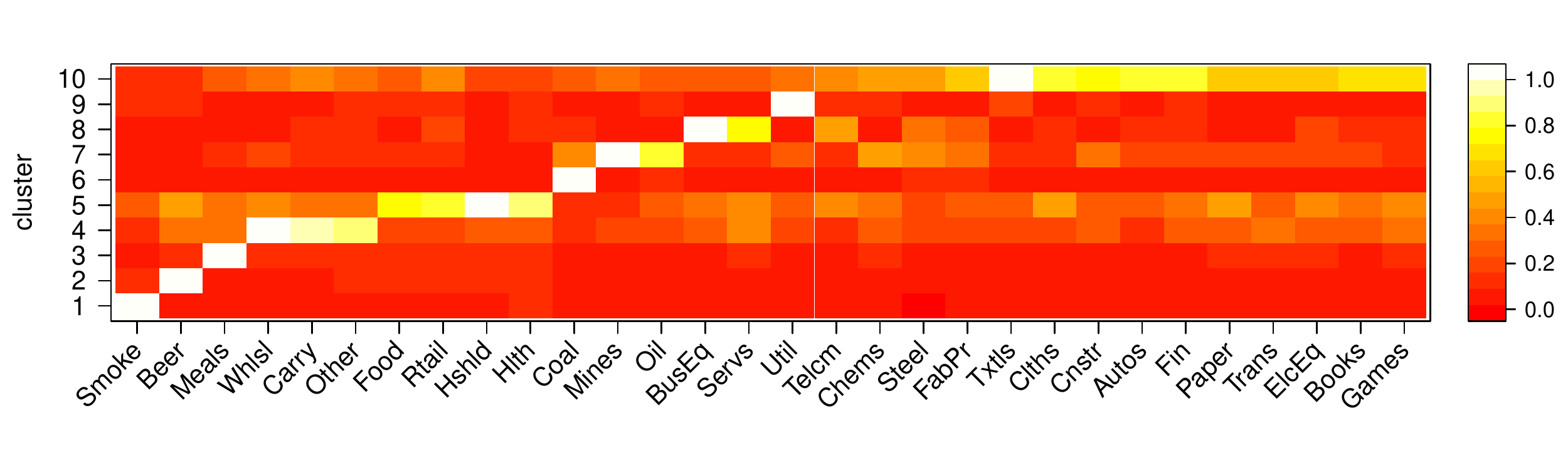}
	\includegraphics[width=.95\textwidth]{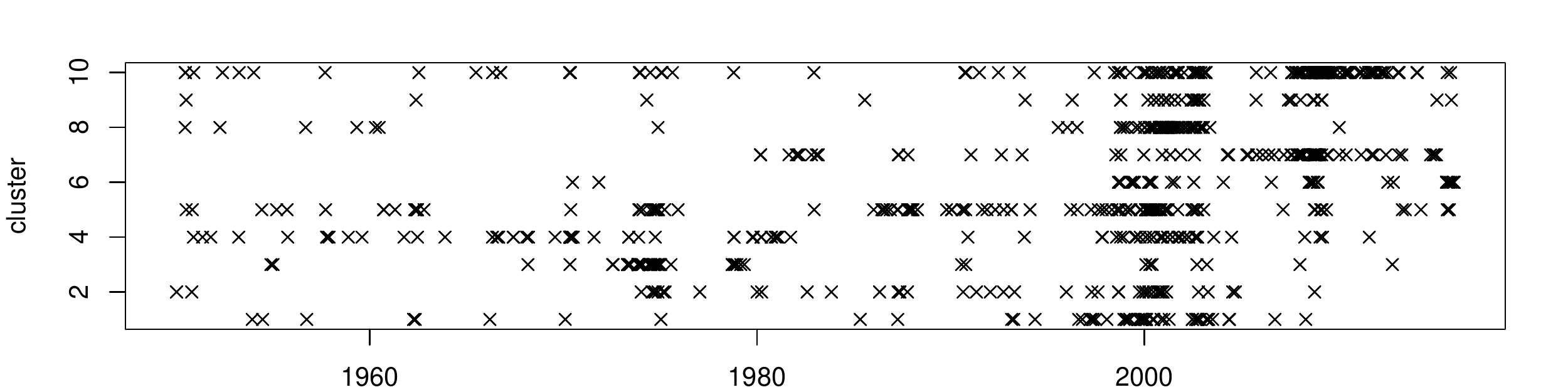}
	\caption{The $k$-means clustering result on the financial portfolio loss data for $k=10$.  Top: cluster classification result vs.~time.  Bottom: re-normalized cluster centers illustrated.}
	\label{Fig:portfolio2}
	\end{figure}

The top graph of Figure~\ref{Fig:port1} illustrates the cluster centers with $k=5$. We can see that the clusters clearly separate the categories into different sectors.  Cluster 1 signals the asymptotic independence of the tobacco industry to all other categories.  Cluster 2 focuses on the energy and material sectors.  Cluster 3 consists of business and IT related industries.  Cluster 4 consists of the consumer oriented industries and Cluster 5 encompasses the rest.  

The top graph of Figure~\ref{Fig:port1}, which shows the time points of the extreme losses in each cluster, also provides interesting insights. The extreme losses for Cluster 1 occurred around 2000, when massive lawsuits surged against tobacco companies. For Cluster 2, since the turn of the millennium, the U.S.\ coal and mining industries have been more heavily affected by government regulations, the rise of alternative sources of energy and foreign imports, which led to many struggles in the industry.  The timeline for Cluster 3 clearly indicates the dot-com bubble in the late nineties. The consumer goods of Cluster 4 were heavily affected by U.S.\ recessions, most prominently by the oil crisis in 1973. Cluster 5 can be explained as widespread effects of the dot-com bubble and financial crisis.


Figure~\ref{Fig:portfolio2} shows the same set of results for $k=10$.  There is a clear pattern where most clusters have only one or few large components, hinting at an overall strong level of asymptotic independence.  We can clearly identify sectors which are more prone to exhibiting joint losses, and that many of the connections from the analysis with $k=5$ remain. Still tightly linked are the consumer sector (Cluster 5), the energy sector (Cluster 7), the business and IT sectors (Cluster 8) and a wide collection of industries linked to the financial industry (Cluster 10).  The timeline plot also shows similar patterns to the previous results and clearly identifies the major events in the financial market history.

\subsection{Dietary intakes data} \label{Subsec:dietary}

In this section we look at the dietary interview from the 2015--2016 NHANES report, available at \url{https://wwwn.cdc.gov/Nchs/Nhanes/2015-2016/DR1TOT_I.XPT}.  The interview component, called ``What We Eat in America'', recorded the food and beverage consumed by all participants during the 24 hours period prior to the interview.  The resulting dataset describes the nutrients information calculated from these observations.  We are interested in the dependency of 38 chosen nutrients in high-level intakes, as high doses of some of the components can have negative health effects. See also \cite{Chautru15} for the analysis of a similar, but smaller data set.
\begin{figure}[H]
	\centering
	\centering
	\includegraphics[width=1.1\textwidth]{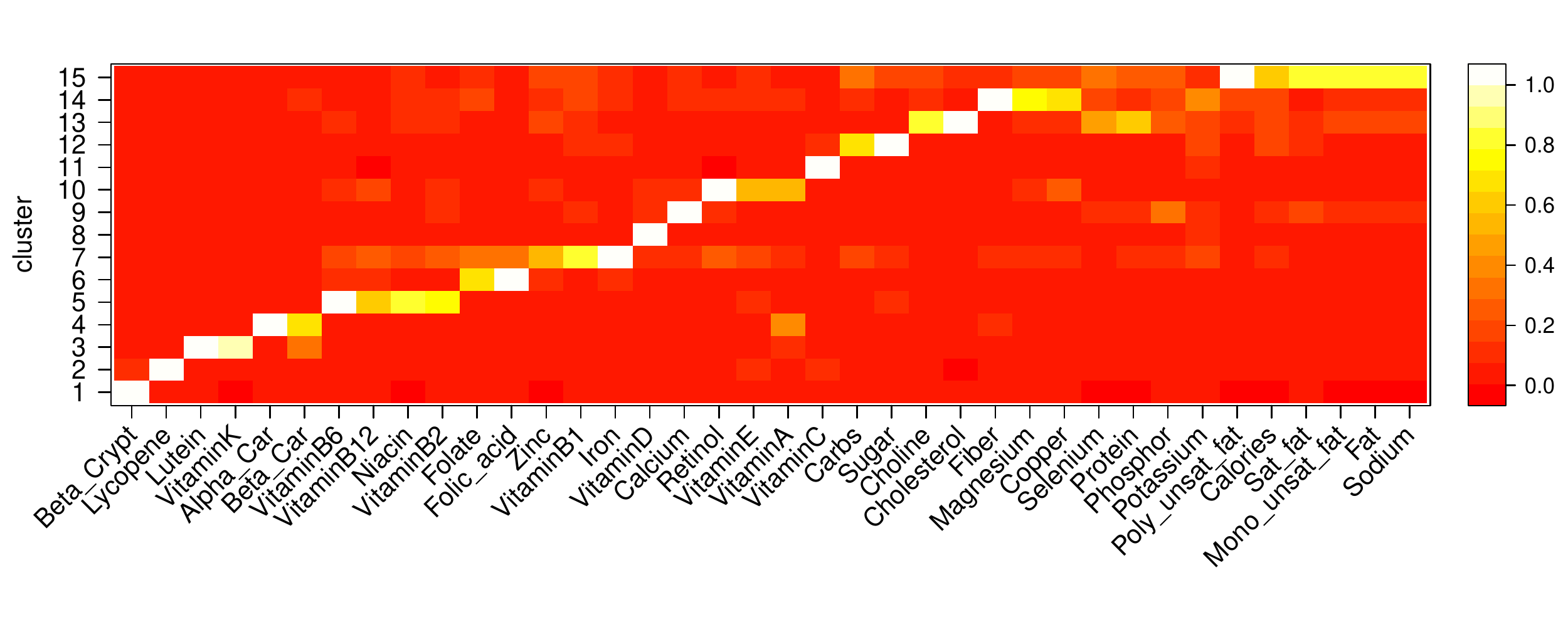}
	\caption{The $k$-means cluster centers in the dietary intakes data for $k=15$. }
	\label{Fig:diet1}
\end{figure}
We derive again an estimator for the spectral measure by transforming observations with the help of the empirical distribution function and keeping the transformed observations whose Euclidean norm belongs to the largest 5\%. The choice of $k$ is again ambiguous in this data set and the elbow plot is similar to that in Figure~\ref{Fig:portelbow}. What is clear is that as $k$ increases the number of cluster centers with only one large component increases, again pointing at asymptotic independence of most of the nutrients. Significant clusters for several values of $k$ can nevertheless be identified as clusters formed by carbs and sugar, by vitamin B$_2$, Vitamin B$_6$, Vitamin B$_{12}$ and niacin, by lutein and vitamin K, by iron and vitamin B$_1$ and finally by fat together with fatty acids which goes also hand in hand with high values of intaken calories. 

\begin{figure}[H]
\centering
\includegraphics[width=1.1\textwidth]{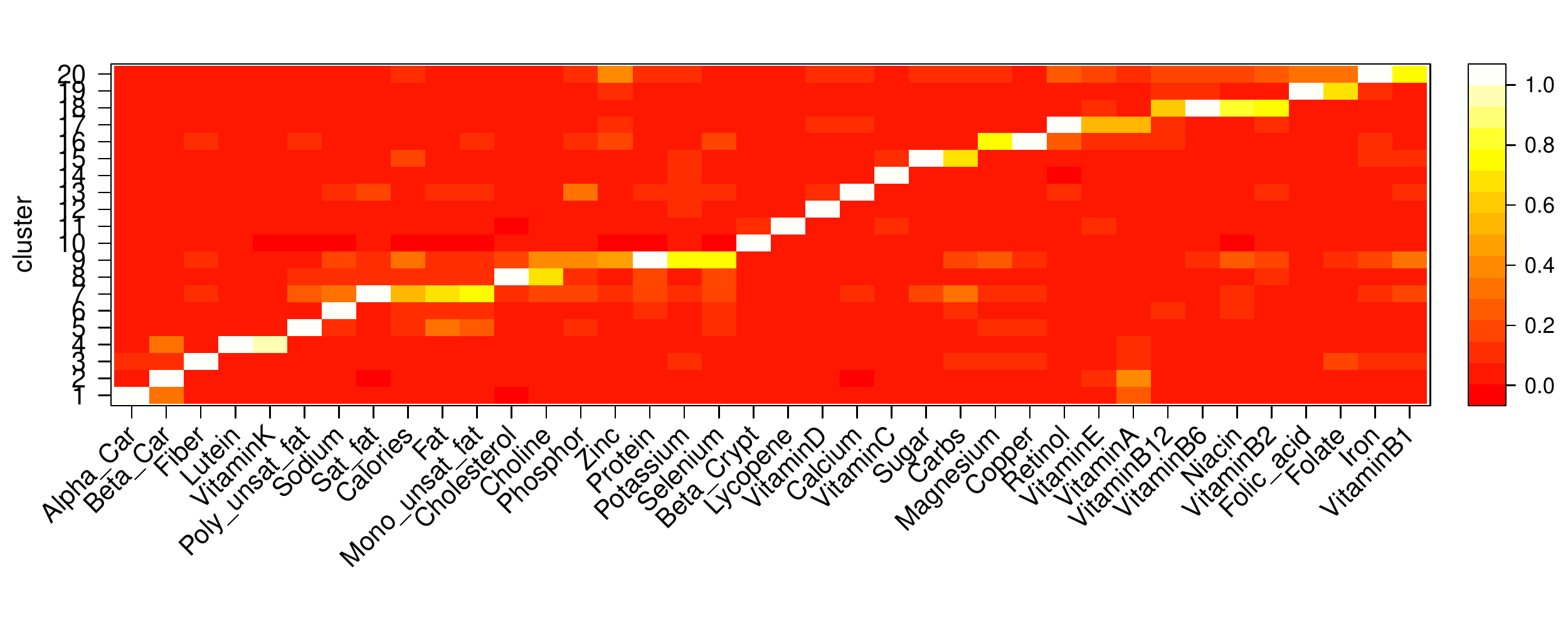}
\caption{The $k$-means cluster centers in the dietary intakes data for $k=20$. }
\label{Fig:diet2}
\end{figure}

\section*{Acknowledgements}
The authors thank Dan Cooley, Holger Drees, Henrik Hult and Chen Zhou for valuable discussions and suggestions regarding this manuscript.
\bibliography{dreductionextremes}
\bibliographystyle{abbrvnat}
\end{document}